\title{Taking Mermin's Relational Interpretation of QM\\Beyond Cabello's and Seevinck's No-Go Theorems}
\author{{\sc C. de Ronde}$^{1,2,3}$, {\sc  R. Fern\'andez-Mouj\'an}$^{1,2}$ and {\sc C. Massri}$^{4,5}$}
\date{}
\begin{document}

\bibliographystyle{plain}
\maketitle

\begin{center}
\begin{small}
1. Philosophy Institute Dr. A. Korn, University of Buenos Aires - CONICET.\\
2. Center Leo Apostel for Interdisciplinary Studies\\Foundations of the Exact Sciences - Vrije Universiteit Brussel.\\
3. Institute of Engineering - National University Arturo Jauretche.\\
4. Institute of Mathematical Investigations Luis A. Santal\'o, UBA - CONICET.\\
5. University CAECE.
\end{small}
\end{center}

\bigskip

\begin{abstract}
\noindent In this paper we address a deeply interesting debate that took place at the end of the last millennia between David Mermin, Adan Cabello and Michiel Seevinck, regarding the meaning of relationalism within quantum theory. In a series of papers, Mermin proposed an interpretation in which {\it quantum correlations} were considered as {\it elements of physical reality}. Unfortunately, the very young relational proposal by Mermin was too soon tackled by specially suited no-go theorems designed by Cabello and Seevinck. In this work we attempt to reconsider Mermin's program from the viewpoint of the Logos Categorical Approach to QM. Following Mermin's original proposal, we will provide a redefinition of {\it quantum relation} which not only can be understood as a preexistent {\it element of physical reality} but is also capable to escape Cabello's and Seevinck's no-go-theorems. In order to show explicitly that our notion of ontological quantum relation is safe from no-go theorems we will derive a non-contextuality theorem. We end the paper with a discussion regarding the physical meaning of quantum relationalism.    
\end{abstract}
\begin{small}

{\bf Keywords:} {\em Mermin's interpretation, quantum mechanics, relationalism, contextuality.}
\end{small}

\newtheorem{theo}{Theorem}[section]
\newtheorem{definition}[theo]{Definition}
\newtheorem{lem}[theo]{Lemma}
\newtheorem{met}[theo]{Method}
\newtheorem{prop}[theo]{Proposition}
\newtheorem{coro}[theo]{Corollary}
\newtheorem{exam}[theo]{Example}
\newtheorem{rema}[theo]{Remark}{\hspace*{4mm}}
\newtheorem{example}[theo]{Example}
\newcommand{\proof}{\noindent {\em Proof:\/}{\hspace*{4mm}}}
\newcommand{\qed}{\hfill$\Box$}
\newcommand{\ninv}{\mathord{\sim}} 
\newtheorem{postulate}[theo]{Postulate}

\bigskip

\bigskip

\section*{Introduction}

At the very end of the last millennia David Mermin presented a series of three papers in which he discussed the foundational problems of quantum theory from a realist viewpoint \cite{Mermin98a, Mermin98b, Mermin99}. Mermin's main goal was to put forward a relational interpretation of QM in which {\it correlations} would be considered as {\it local realistic elements of physical reality}. The approach was named ``Ithaca'' and found its ground in the Sub-System Correlation (SSC) Theorem which states that subsystem correlations are enough to determine the state of the entire system uniquely. However, very soon, specially suited no-go theorems were developed by Adan Cabello and Michiel Seevinck in order to bring Ithaca down. These new theorems precluded  explicitly Mermin's attempt to interpret correlations as {\it elements of physical reality}. In this article we will address the tension which appears from the mutual co-existence of the SSC theorem, on the one hand, and Cabello's and Seevinck's no-go theorems, on the other. By taking into account the lessons of this deeply interesting debate we will propose a reconsideration of Mermin's program from the new viewpoint presented by the logos categorical approach to QM \cite{deRondeMassri18a, deRondeMassri18b, deRondeMassri18c}. Our attempt is to take Mermin's relational proposal beyond the constraints discussed by Cabello and Seevinck. We will do this by bringing into stage an explicit definition of the key notion of {\it objective probability} to which Mermin refers to. The paper is organized as follows. In section 1 we present Mermin's Ithaca relational interpretation of QM. In section 2 we provide a detailed review of Cabello's and Seevinck's no-go theorems. Section 3 revisits Mermin's approach as related to probabilistic correlations and objective probability. Section 4 provides an explicit definition of objective probability in terms of intensive relata and valuations. In section 5 we explain how, through the derivation of a non-contextuality theorem, our redefinitions of intensive relata and correlation are able to bypass Cabello's and Seevinck's no-go theorems. Finally, in section 6, we address Mermin's ideas from an intensive viewpoint.

\section{Mermin's Relational Interpretation of QM}\label{Correlations as Elements of Physical Reality}

Mermin's main idea giving rise to the Ithaca interpretation of QM can be summarized in the following dictum:  ``Fields in empty space have physical reality; the medium that supports them does not. Correlations have physical reality; that which they correlate does not.'' From this idea, Mermin attempts then to advance in the definition of a set of ``strong requirements for what I would consider a sensible interpretation of quantum mechanics''. These demands are presented in terms of six ``personal desiderata for a satisfactory interpretation'': 
\begin{enumerate}
\item[I.] {\it The theory should describe an objective reality independent of observers and their knowledge.}
\item[II.] {\it The concept of measurement should play no fundamental role.}
\item[III.] {\it The theory should describe individual systems ---not just ensembles.}
\item[IV.] {\it The theory should describe small isolated systems without having to invoke interactions with anything external.}
\item[V.] {\it Objectively real internal properties of an isolated individual system should not change when something is done to another non-interacting system.}
\item[VI.] {\it It suffices (for now) to base the interpretation of QM on the (yet to be supplied) interpretation of objective probability.}
\end{enumerate}

\noindent It is important to remark ---for reasons that will become evident in the following sections--- that these desiderata are grounded on some of Einstein's main requirements of what any physical theory ---even QM--- should be able to accomplish. 

After this set of desiderata for an ``acceptable theory'' ---to which we shall return in section 4--- Mermin presents the two pillars of the Ithaca interpretation. The first pillar is that {\it i) correlations are the only fundamental and objective properties of the world.} The second is that {\it ii) a density matrix of a system is a fundamental objective property of that system whether or not it is a one dim projection operator (i.e. mixed states are as fundamental as pure states).} Mermin argues that the quantum state is an encapsulation of internal correlations. Here, the Sufficiency of Subsystem Correlations (SSC) theorem, as Mermin calls it, plays an essential role.

\begin{theo}[SSC Theorem] \label{ssc}
Let $\rho$ be a density matrix
in a Hilbert space $\mathcal{H}=\mathcal{H}_1\otimes\mathcal{H}_2$. 
Then, the set $\{\mbox{Tr}(\rho\,P\otimes Q)\,\colon\,P,Q\mbox{ projectors in }\mathcal{H}_1
\mbox{ and }\mathcal{H}_2\}$ determine the matrix $\rho$ completely. 
\end{theo}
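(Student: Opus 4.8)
The plan is to exploit two standard facts: that projection operators linearly span the whole operator algebra, and that the trace pairing $(\rho,M)\mapsto \mbox{Tr}(\rho M)$ is non-degenerate. I will work in the finite-dimensional setting, which is the one relevant to Mermin's discussion; the infinite-dimensional version requires only routine density and continuity caveats, indicated at the end.

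First I would show that every operator on $\mathcal{H}_1$ is a complex linear combination of projectors in $\mathcal{H}_1$. Given $A$ acting on $\mathcal{H}_1$, decompose $A=\tfrac{1}{2}(A+A^\dagger)+i\,\tfrac{1}{2i}(A-A^\dagger)$ as a combination of two self-adjoint operators; by the spectral theorem each summand is a real linear combination of its orthogonal eigenprojectors. Hence $A=\sum_k c_k P_k$ with $c_k\in\mathbb{C}$ and $P_k$ projectors in $\mathcal{H}_1$, and likewise $B=\sum_l d_l Q_l$ for any $B$ acting on $\mathcal{H}_2$. Consequently
\[
\mbox{Tr}(\rho\, A\otimes B)=\sum_{k,l} c_k d_l\,\mbox{Tr}(\rho\, P_k\otimes Q_l),
\]
so the left-hand side is fixed by the given data for every $A$ and $B$.

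Next I would use that product operators span the full algebra: fixing a basis $\{A_\alpha\}$ of the operators on $\mathcal{H}_1$ and a basis $\{B_\beta\}$ of the operators on $\mathcal{H}_2$, the family $\{A_\alpha\otimes B_\beta\}$ is a basis of the operators on $\mathcal{H}_1\otimes\mathcal{H}_2$. Therefore $\mbox{Tr}(\rho\, M)$ is determined for every operator $M$ on $\mathcal{H}$. Finally, suppose $\rho'$ is another density matrix yielding the same values; then $\mbox{Tr}((\rho-\rho')M)=0$ for all $M$, and since $\rho-\rho'$ is self-adjoint, taking $M=\rho-\rho'$ gives $\mbox{Tr}((\rho-\rho')^2)=0$, whence $\rho=\rho'$. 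This shows the correlation data pin down $\rho$ uniquely.

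The one delicate point — the step I would treat as the main obstacle — is the passage from "$\mbox{Tr}(\rho M)$ is known on a spanning set" to "$\rho$ is determined": in finite dimensions this is immediate from non-degeneracy of the trace form, but for infinite-dimensional $\mathcal{H}$ one must recall that $\rho$ is trace-class, that the span of projectors is only weak-$*$ dense in the bounded operators, and invoke continuity of $M\mapsto\mbox{Tr}(\rho M)$ in that topology. Since Mermin's setting is finite-dimensional, the elementary linear-algebra argument above suffices.
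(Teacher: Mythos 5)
Your proof is correct and follows essentially the same route as the argument in the references the paper cites for this theorem (Mermin's Appendix A and Bergia et al.): products of subsystem projectors span the full operator algebra, and the non-degeneracy of the trace pairing then pins down $\rho$. The paper itself gives no independent proof beyond the citation, and your handling of the finite-dimensional case, including the uniqueness step via $\mbox{Tr}((\rho-\rho')^2)=0$, is sound.
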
 
\noindent  {\it Proof:} 
See \cite{Bergia80} and \cite[Appendix A]{Mermin98b}.
\qed 

\smallskip

\smallskip

The interpretation provided by Mermin to the theorem is that ``sub-system correlations (for any one resolution of the system into subsystems) are enough to determine the state of the entire system uniquely.'' According to Mermin, correlations must be considered as being intrinsically probabilistic, as based in an ``objective'' ---although not provided--- understanding of quantum probability; one which has nothing to do with ignorance. For him, the misstep is to go beyond the correlations, taken as joint probability distributions, to actual, binary, perfectly determined correlata. When we do that, we find the well-known incoherencies in quantum physics that have bewildered the researchers for so long. Mermin \cite[p. 754]{Mermin98b} argues that: ``If correlations constitute the full content of physical reality, then the fundamental role probability plays in quantum mechanics has nothing to do with ignorance. The correlata ---those properties we would be ignorant of--- have no physical reality. There is nothing for us to be ignorant of.'' Furthermore [{\it Op. cit.}, p. 760], ``If physical reality consists of all the correlations among subsystems then physical reality cannot extend to the values for the full set of correlata underlying those correlations.'' And finally [{\it Op. cit.}, p. 761], ``if all correlations between subsystems do have joint physical reality, then distributions conditional on particular subsystem properties cannot in general exist, and therefore such correlations must be without correlata.''

We believe that in Mermin's proposal there are some very valuable and promising intuitions that in some cases renew the understanding of QM, and are capable of freeing ourselves from several false problems that have occupied so many authors for so long. For instance, he shows in a simple manner that when we hold fast to probability (in his words: to correlation), in fact ``all the joint distributions associated with any of the possible resolutions of a system into subsystems and any of the possible choices of observable within each subsystem, are mutually compatible: They all assign identical probabilities within any sets of subsystems to which they can all be applied.'' Or the way in which he emphasizes the importance of ``freeing quantum mechanics from the tyranny of measurement''. Or, finally, how he shows the confusions between epistemic and ontological levels in the absurd problem raised by the distinction between pure and mixed states: ``It is often said that a system is in a pure state if we have maximum knowledge of the system, while it is in a mixed state if our knowledge of the system is incomplete. But from the point of view of the IIQM, {\it we} are simply a particular subsystem, and a highly problematic one at that, to the extent that our consciousness comes into play. This characterization of the difference between pure and mixed states can be translated into a statement about objective correlation between subsystems, that makes no reference to us or our knowledge'' 

Mermin's relational proposal has also the particularity of escaping the confusion ---present in almost all other relational attempts--- between relationalism and relativism \cite{deRondeFM18}. Relationalism for him does not mean that the value of a state makes reference to the perspective another state has of it. For Mermin, there is no need for the intromission of subjects. Correlations for him are objective physical values (as long as correlata are not considered), and they are not dependent on the particular observations of agents. What exists, beyond ourselves and any sort of perspectivalism, are correlations. We should be able to talk about them without falling into relativism or introducing our own incapacity to understand quantum physics as a feature of the theory itself. As he argues \cite[Sect. 1]{Mermin98a}: ``A satisfactory interpretation should be unambiguous about what has objective reality and what does not, and what is objectively real should be cleanly separated from what is `known'. Indeed, knowledge should not enter at a fundamental level at all.''  Mermin, following Einstein's well known objective ---or ``detached''--- realism argues that: 
\begin{quotation} 
\noindent {\small ``There is a world out there, whether or not we choose to poke at it, and it ought to be possible to make unambiguous statements about the character of that world that make no reference to such probes. A satisfactory interpretation of quantum mechanics ought to make it clear why `measurement' keeps getting in the way of straight talk about the natural world; `measurement' ought not to be a part of that straight talk. Measurement should acquire meaning from the theory ---not vice versa... Physics ought to describe the unobserved unprepared world. `We' shouldn't have to be there at all.'' \cite[p. 551]{Mermin98a}}\end{quotation}

\section{No-Go Theorems for Mermin's Correlations}

Mermin's proposal caught the attention of the specialized foundational literature. In particular, Adan Cabello and Michiel Seevinck were ready to analyze more in depth the meaning of how quantum correlations could be understood as elements of physical reality. Their analysis brought a negative answer to the project of considering correlations as the building blocks of quantum reality. Cabello's and Seevinck's analysis focused on the possibility to constrain the Ithaca interpretation in terms of specially suited no-go theorems. While Cabello's theorems where grounded on a Kochen-Specker type analysis \cite{KS}, Seevinck considered correlations in terms of a Bell type inequality. Let us discuss these no-go theorems in some detail.

\subsection{Cabello's No-Go Theorems}\label{Cabello's Kochen-Specker Type Theorems}

Adan Cabello was able to develop in \cite{Cabello98, Cabello99} two no-go
theorems by taking into account the following assumptions of
Mermin's proposal:
\begin{enumerate}
\item {\bf Density matrices:} describe isolated individual systems ---not just ensembles. Density matrices fully describe all the internal-correlations of an isolated individual system.
\item {\bf Reality:} All correlations between subsystems of an isolated composed system are real objective internal properties of the composed system.
\item {\bf Locality:} Real objective internal properties of an isolated system ``cannot change in immediate response to what is done to a far-away system that may be correlated but does not interact with the first."
\end{enumerate}

\noindent Cabello \cite{Cabello98} was able to prove that if one assumes (a)
and (b), then assumption (c) cannot be correct. More specifically,
Cabello shows that ``the assumption that correlations between
subsystems of an individual isolated composite quantum system are
real, objective local properties of that system is inconsistent". 
To do so, he considers the following situation. Given two sources, each of which emits a single pair of spin particles in the singlet state, the initial state of the four
particles is given by:
\begin{equation}
|\Psi\rangle = \frac{1}{2} (|+\rangle_{1} \otimes |-\rangle_{2} -
|-\rangle_{1} \otimes |+\rangle_{2}) \otimes (|+\rangle_{3} \otimes
|-\rangle_{4} - |-\rangle_{3} \otimes |+\rangle_{4})
\end{equation}

\noindent Cabello then considers the following two experiments:

\smallskip 

\noindent {\it Experiment 1:} On particles 2 and 3, we perform a measurement
of component z of the spin of each particle. This measurement
projects the combined state of a single pair of particles 2 and 3
onto one of the following four factorizable pure states:
\begin{equation}
|+\rangle_{2} \otimes |+\rangle_{3}, |+\rangle_{2} \otimes
|-\rangle_{3}, |-\rangle_{2} \otimes |+\rangle_{3}, |-\rangle_{2}
\otimes |-\rangle_{3}
\end{equation}

\noindent This measurement on particles 2 and 3 also projects the combined
state of the corresponding single pair of particles 1 and 4 onto,
respectively, one of the following factorizable pure states:
\begin{equation}
|-\rangle_{1} \otimes |-\rangle_{4}, |-\rangle_{1} \otimes
|+\rangle_{4}, |+\rangle_{1} \otimes |-\rangle_{4}, |+\rangle_{1}
\otimes |+\rangle_{4}
\end{equation}

\noindent There is then a one-to-one correspondence between the four states
(2) and the four states of (3).

\smallskip 

\noindent {\it Experiment 2:} Instead of a spin measurement on each of
particles 2 and 3, we perform a measurement of the Bell operator  on
particles 2 and 3. This measurement projects the combined state of a
single pair of particles 2 and 3 onto one of the four Bell states:
\begin{equation}
|\Psi^{\pm} \rangle_{23} = \frac{1}{2} (|+\rangle_{2} \otimes
|-\rangle_{3} \pm |-\rangle_{2} \otimes |+\rangle_{3})
\end{equation}
\begin{equation}
|\Phi^{\pm} \rangle_{23} = \frac{1}{2} (|+\rangle_{2} \otimes
|+\rangle_{3} \pm |-\rangle_{2} \otimes |-\rangle_{3})
\end{equation}

\noindent which form a complete basis for the combined system of
particles 2 and 3. This measurement on particles 2 and 3 also
projects the combined state of the corresponding single pair of
particles 1 and 4 onto, respectively, one of the Bell states.
\begin{equation}
|\Psi^{+} \rangle_{14}, \ |\Psi^{-} \rangle_{14}, |\Phi^{+}
\rangle_{14}, |\Phi^{-} \rangle_{14}
\end{equation}

Before any of the alternative experiments there is no correlation
between any of the particles 1 and 2 and any of the particles 3 and
4. On the other hand, before any of the experiments, particles 2 and
3 form a dynamically isolated subsystem; i.e., they have no external
interactions. After any of the experiments, particles 2 and 3 do not
form a dynamically isolated system since they have interacted with
the measuring  apparatus. If particles 1 and 4 are space-like
separated from the experiment performed on particles 2 and 3, then
particles 1 and 4 cannot interact with the measuring apparatus.
Therefore, particles 1 and 4 form a dynamically isolated system
before and after any of the experiments.

Mermin's interpretation assumes physical locality, defined as ``the
fact that the internal correlations of a dynamically isolated system
do not depend on any interactions experienced by other systems
external to it''. However, while any of the four possible states of
particles 1 and 4 after an experiment of the first type given in Eq.
(13.3) are factorizable, any of the four possible states after an
experiment of the second type given in Eq. (13.6) are maximally
entangled. This means that while after an experiment of the first
kind (regardless of the result), particles 1 and 4 have their spins
correlated only in the z direction, after an experiment of the
second kind (irrespective of the result), particles 1 and 4 are
highly correlated: every component of spin of particle 1 is
correlated with other component of spin of particle 4, and vice
versa. Therefore, the internal correlations between particles 1 and
4 are completely different depending on the interaction between
particles 2 and 3 and an external system.
\begin{quotation}
\noindent {\small ``Accepting assumptions (a) and (b) means, in this example,
the violation of physical locality as defined by Mermin. By this
violation of physical locality I do not mean that the internal
correlations between particles 1 and 4 ``change" after a spacelike
separated experiment (this does not happen in the sense that no new
internal correlations are ``created" that were not ``present" in the
reduced density matrix for the system 1 and 4 before any
interaction), but that the type of internal correlations (and
therefore, according to Mermin, the reality) of an individual
isolated system can be chosen at a distance.'' \cite[p. 2]{Cabello98}} \end{quotation}

After this first paper, Cabello went even further and showed explicitly in \cite{Cabello99}, that QM is incompatible with the assumption that: ``all possible correlations between subsystems of an individual isolated composite quantum system are contained in the initial quantum state of the whole system, although just a subset of them is revealed by the actual experiment." In other words, one gets into contradictions if one assumes that correlations are objective properties of the quantum wave function, properties which can be un-veiled through measurement.
\begin{quotation}
\noindent {\small ``In (1998) I wrote ``I do not mean internal correlations
`change'... no new correlations are `created' that were not
`present' in the reduced density matrix... but that the internal
correlations of an individual isolated system can be chosen at a
distance.'' So implicitly I admitted that all such possible
correlations between two parts were present somehow in the initial
quantum state of the whole system, although just a subset of them is
revealed by the actual experiment. The aim of this note is to show
that even this innocuous-looking assumption is incompatible with
quantum mechanics'' \cite[p. 1]{Cabello99}}
\end{quotation}

Cabello was able to show, through a GHZ like proof \cite{GHZ} and a
Hardy like proof \cite{Hardy93} that QM does not
contain all the correlations in the initial state to which one can
simultaneously ascribe definite values. Take three pairs of
spin-$\frac{1}{2}$ particles labeled from 1 to 6. The Hilbert space
in which we describe the spin state of this system has dimension
sixty four, we will call it $H_{64}$. Let $A_{ij}$ be the
non-degenerate operator acting on the four-dimensional subspace of
particles i and j, defined as:
\begin{equation}
A_{ij} = 2 \alpha_{ij}^{++} + \alpha_{ij}^{+ -} - \alpha_{ij}^{- +} - 2 \alpha_{ij}{- -}
\end{equation}

\noindent Were $\alpha_{ij}^{+ -}$ is the projection operator onto the state
$| \alpha_{ij}^{+ -} \rangle = |+\rangle_{i} \otimes |-\rangle_{j}$,
etc. Let $B_{ij}$ the non-degenerate Bell operator defined as:
\begin{equation}
B_{ij} = 2 \Phi_{ij}^{+} + \Psi_{ij}^{+} - \Psi_{ij}^{-} - 2 \Phi_{ij}^{-}
\end{equation}

\noindent Where $\Phi_{ij}^{+}$ is the projection operator onto the state
$|\Phi_{ij}^{+}\rangle_{ij}$. If we consider the operators acting in
$H_{64}$ and let one of these common eigenvectors, defined by the
following equations, $|\mu_{i}\rangle$ be the initial state of the
six-particle system:
\begin{equation}
A_{12}A_{34}B_{56} |\mu_{i}\rangle = |\mu_{i}\rangle
\end{equation}
\begin{equation}
A_{12}B_{34}A_{56} |\mu_{i}\rangle = |\mu_{i}\rangle
\end{equation}
\begin{equation}
B_{12}A_{34}A_{56} |\mu_{i}\rangle = |\mu_{i}\rangle
\end{equation}
\begin{equation}
B_{12}B_{34}B_{56} |\mu_{i}\rangle = - |\mu_{i}\rangle
\end{equation}

\noindent Assume now that the correlations between subsystems of the composed
system are real objective internal local properties of such
subsystems. In particular, consider three subsystems: the first
composed by particles 1 and 2, the second by particles 3 and 4, and
the third by particles 5 and 6. We will assume that all possible
correlations between particles 1 and 2 (for instance) are encoded in
the initial state for the whole system, and they do not depend on
any interaction experienced by the other subsystems, so they cannot
change (in particular, they cannot be created) as a result of any
experiment performed on particles 3 to 6 (supposed to be space-like
separated from particles 1 and 2). Now consider three observers, each having access to one pair of
particles. On each pair, they may measure either $A_{ij}$ or
$B_{ij}$, without disturbing the other pairs. The results of these
measurements will be called $a_{ij}$ or $b_{ij}$, respectively.
\begin{equation}
a_{12} a_{34} b_{56} = 1
\end{equation}
\begin{equation}
a_{12} b_{34} a_{56} = 1
\end{equation}
\begin{equation}
b_{12} a_{34} a_{56} = 1
\end{equation}
\begin{equation}
b_{12} b_{34} b_{56} = -1
\end{equation}

\noindent We can associate each one of the eigenvalues $a_{ij}$ and $b_{ij}$
with a type of correlation between particles {\it i} and {\it j} initially hidden in the original state of the system, but ``revealed" by performing measurements on the two other distant pairs.

Following the idea that correlations pre-exist to measurement, if $B_{12}$ and
$B_{34}$ are measured and their results are both 1, then one can
predict with certainty that particles 5 and 6 are in the singlet
state, and since arriving to this conclusion does not require any
real interaction on particles 5 and 6, then we assume that the spins
of particles 5 and 6 were initially correlated in the singlet state
(i. e., the same spin component of particles 5 and 6 would have
opposite signs), so we assign the value $-1$ for the observable
$B_{56}$ to the initial state $|\mu_{i}\rangle$. Analogously, we can
do the same for the other correlations. Such predictions with
certainty and without interaction would lead us to assign values to
the six types of correlations given by $A_{12}$, $B_{12}$, $A_{34}$,
$B_{34}$, $A_{56}$, and $B_{56}$. However, such an assignment cannot
be consistent with the rules of QM because the four
equations (13-16) cannot be satisfied simultaneously, since
the product of their left-hand sides is a positive number (because
each value appears twice), while the product of the right-hand sides
is $-1$. Therefore, the whole information on the correlations
between the particles of the three pairs cannot be encoded in the
initial state as we assumed.

The conclusion to which Cabello arrives extends the result, is that it is not possible to give definite values to the correlations between properties which pertain to different contexts. It is important to stress at this point that, within his reading of Mermin's interpretation, Cabello assumed that correlations are given by {\it definite values} of correlations among the {\it properties}. 
\begin{definition}\label{actual correlations}
{\bf Actualist Binary Correlations:} the definite values of correlations among the preexistent properties of a state at a definite time.
\end{definition}
The moral which must be drawn form this interesting set of papers seems to be a valuable one: {\it not even correlations (understood in the sense of Cabello) are exempt of respecting the contextual character of QM.} Quantum correlations are contextual and thus, cannot be considered as definite valued preexistent elements of physical reality. 

\subsection{Seevinck's Inequality}\label{Seevinck's Bell Type Inequality}

In order to bypass Cabello's theorem, Mermin escape route was to consider {\it correlations}, not in terms of definite valued properties but instead, in terms of joint probability distributions statistically defined. 
\begin{definition}
{\bf Statistical Correlations:} the correlations among subsystems are the mean values, at a time, of the system's observables that are products of individual subsystem observables.
\end{definition}

\noindent However, even if we bite the pill and assume, regardless of the problems just mentioned, that this statistical notion of correlation is the one we need to sustain, then contextuality pops up once again and precludes the possibility of considering (also) such statistical correlations as making reference to an underlying actual (non-contextual) state of affairs. It was Michiel Seevinck  in \cite{Seevinck06} who discussed whether the quantum world could be considered to be built up from local statistical correlations. Through the derivation of a Bell type inequality Seevinck was able to answer this question negatively. Seevinck considers the correlations between two spatially separated parties $I$ and $II$ which have, each of them, a bi-partite system. Assuming that each party determines the correlations (the joint probability distributions $P^{I}_{AB}(ab)$ and $P^{II}_{CD}(cd)$ of the bi-partite system at his side) and assuming local realism for the correlations, the joint probability distribution over the four possible outcomes factorizes:
\begin{equation}\label{15.16}
P_{AB, CD}(ab,cd) = \int P^{I}_{AB}(ab|\lambda)
P^{I}_{AB}(ab|\lambda) \rho(\lambda) d\lambda
\end{equation}

\noindent Suppose now that we deal with dichotomic quantities $A, B, C, D$
with possible outcomes $a, b, c, d, {-1, 1}$. The mean value of the
product of two correlations is given by
\begin{equation}
E(AB, CD) = \sum abcd  P_{AB, CD} (ab, cd)
\end{equation}

\noindent  And because of the factorizability of Eq. \ref{15.16} one can derive
the following Bell inequality in the CHSH form:
\begin{equation}
|E(AB, CD) + E(AB, (CD)') + E((AB)', CD) - E((AB)', (CD)')| \leq  2
\end{equation}

\noindent Where $AB, (AB)'$ denote two sets of quantities that give rise to
two different joint probabilities at party $I$, and equivalently for
$CD$. With this Bell inequality in terms of correlations Seevinck is
able to produce a new quantum mechanical version:
\begin{equation}
|E_{Wo}({\ss})| =  |E_{Wo}( AB,CD) + E_{Wo}( AB,(CD)') +
E_{Wo}((AB)',CD) - E_{Wo}((AB)',(CD)')| \leq  2
\end{equation}

\noindent Where $E_{Wo}(AB,CD) = Tr [Wo A \otimes B \otimes C
\otimes D]$, and $�$ is the so called Bell operator.

Seevinck provides then an example of a violation with the following
quantum experiment. Consider two sets of two dichotomic observables
represented by self-adjoint operators $a$, $a'$ and $b$, $b'$ for
party $I$ or $II$ respectively. Each observables acts on the
subspace $H = C_{2} \otimes C_{2}$ of the bi-partite system held by
the respective party $I$ or $II$. These observables are chosen to be
dichotomous, i.e. to have possible outcomes in ${-1, 1}$. They are
furthermore chosen to be sums of projection operators and thus give
rise to unique joint probability distributions on the set of quantum
states. Measuring these observables thus implies determining some
quantum correlations. For these observables a, a' and b, b' the Bell
operators � on $H = C_{2} \otimes C_{2} \otimes C_{2} \otimes C_{2}$
becomes ${\ss} = a \otimes b + a \otimes b' + a' \otimes b + a'
\otimes b'$. The observables of the following form.
\begin{equation}
a = P_{\Psi^{+}} + P_{\Phi^{+}} - P_{\Psi^{-}} - P_{\Phi^{-}}
\end{equation}
\begin{equation}
a' = P_{|\uparrow \uparrow \rangle} + P_{|\uparrow \downarrow
\rangle} - P_{|\downarrow \uparrow \rangle} - P_{|\downarrow
\downarrow \rangle}
\end{equation}
\begin{equation}
b = P_{|\uparrow \uparrow \rangle} + P_{|b + \rangle} - P_{b -} -
P_{|\downarrow \downarrow \rangle}
\end{equation}
\begin{equation}
b' = P_{|\downarrow \downarrow \rangle} + P_{|b' - \rangle} +
P_{|\uparrow \downarrow \rangle} - P_{|\uparrow \uparrow \rangle}
\end{equation}

\noindent where $| b^{\pm} \rangle = C^{\pm} (|\uparrow \downarrow
\rangle + (1 � \sqrt{2}) |\downarrow \uparrow\rangle)$ and $| b'^{�}
\rangle = C^{\pm} (|\uparrow \downarrow \rangle + (-1 � \sqrt{2})
|\downarrow \uparrow\rangle)$, with normalization coefficients $C\pm
= (4 \pm 2 \sqrt{2})^{-6}$. Then, the mean value of the Bell
operators � for the above choice of $a, a', b, b'$ in the state
$|\Psi \rangle$ is equal to
\begin{equation}
|Tr [{\ss} |\Psi \rangle \langle \Psi|]| = 2 \sqrt{2}
\end{equation}

\noindent Which gives a violation of Seevinck's inequality by a factor $\sqrt{2}$. This violation proves that quantum correlations considered as joint probability distributions can neither be considered to be local realistic elements of physical reality.\footnote{We might remark that a similar result was obtained in the context of the modal analysis of Kochen-Specker theorem in \cite{RFD14} where it was proved that it is the actualization of binary values which determines a contradiction and not its previous possible or probabilistic determination.}

\section{Revisiting Mermin's (Probabilistic) Correlations}

One of Mermin's central desiderata is that QM introduces a new notion of {\it objective probability} essentially different to classical probability. However, as he himself admits, he is not capable of providing an explanation of what this interpretation of probability amounts to: ``I freely admit that I cannot give a clear and coherent statement of what this [quantum objective probability] means. The point of my game is to see if I can solve the interpretive puzzles of QM, given a primitive, uninterpreted notion of objective probability.'' \cite[p. 2]{Mermin98a} So instead of providing a definition, Mermin's strategy \cite[p. 775]{Mermin98b} is to bypass the problem: ``My instincts are that this is the right order to proceed in: Objective probability arises only in quantum mechanics. We will understand it better only when we understand quantum mechanics better. My strategy is to try to understand quantum mechanics contingent on an understanding of objective probability.'' As remarked by Timpson and Brown:
\begin{quotation} \noindent {\small ``Mermin's intention was to take quantum mechanics at face value; and not to read more into its formalism than is warranted: in this case, not more than the correlation properties encapsulated in the density operators for systems and their groupings. What each of the mentioned [Cabello's and Seevinck's] analyses does instead is to read more into quantum mechanics: assigning definite values where none are to be had from the state, or assigning probability distributions based on classical reasoning rather than taking the quantum structure as it is.'' \cite[p. 315]{TimpsonBrown10}} \end{quotation}  

\noindent Regardless of Mermin's good intentions, it is the weakness in the foundation of his construction (grounded on a missing notion of objective probability) which leaves the door open for Cabello and Seevinck to sneak in and refute his relational proposal. We can understand Cabello's and Seevinck's intentions and motives: Mermin does not provide a new understanding of probability, so why should they accept the limitations he attempts to impose on the analysis of his own interpretation? Why should we accept that `intrinsically probabilistic correlations' are the elements of reality the theory talks about without any definition of `objective probability'? Or in other words, how can there be any understanding of the meaning of an `intrinsic probabilistic correlation' if there is no understanding of what is the meaning of such `intrinsic probability'? In this respect, what Cabello and Seevinck seem to have proven through their no-go theorems is that the definition of {\it objective probability} is truly fundamental. Contrary to Mermin's expectations, his proposal requires an explicit definition which cannot be bypassed when attempting to solve ``the interpretative puzzles of quantum mechanichs'' \cite[p. 2]{Mermin98a}, for when not provided, we see a natural redirection of probability to its classical understanding. Classical as it is, the ignorance interpretation of probability at least has the benefit of making reference to an existent (actual) state of affairs. Mermin is asking us a little bit too much; i.e., to base an entire realist interpretation of QM on a completely undefined and vague ground (see also \cite{Griffiths03}). 

In relation to this lack of definition there's another presupposition within Mermin's interpretation that is never justified and that, in fact, is not at all necessary: only correlations can be considered in probabilistic terms, but correlata cannot. Correlata, according to Mermin, can only be thought of in terms of binary values. So even though he acknowledged he does not know what probability is, he seems to know that only correlations are in fact probabilistic. And, curiously, he also seems to know ---or should we say suppose--- what correlata are (or can be): they can only be understood in an actual, perfectly determined manner. But why should correlata only be thought of in those terms? Why can't they also be intrinsically probabilistic? When Mermin proposes the idea that QM should be talking about intrinsically probabilistic correlations (without correlata), it is the `intrinsically probabilistic' part of his proposal that makes the interpretation so promising. The ``correlations without correlata'' part is in fact not necessary at all. The `correlations' part of his argument seems really to serve only as a way of legitimizing ---by, in a sense, hiding--- a non-classical (i.e., non dependent on actual binary values) interpretation of an ontological probability. Only because he takes correlations as intrinsically probabilistic, without redirecting them to binary values, his interpretation takes us a step forward: holding fast to probability, and not just to correlations, that is what is truly important. Let's see an example. In a extremely simple yet luminous claim, Mermin states: 
\begin{quotation} \noindent {\small ``It is a remarkable ---but not often remarked upon--- feature of the quantum mechanical formalism that all the joint distributions associated with any of the possible resolutions of a system into subsystems and any of the possible choices of observable within each subsystem, are mutually compatible: They all assign identical probabilities within any sets of subsystems to which they can all be applied. The physical reality of subsystem correlations therefore need not be restricted to any particular resolution of a system into subsystems or to particular choices of observable within each subsystem.'' \cite[p. 754]{Mermin98b}} \end{quotation}

\noindent But then he affirms, without any proof of it, that this probabilistic quantum physical reality he talks about ---that enables us not to choose only one ``resolution into subsystems''--- actually means that ``it is only correlations which are the objects of quantum physics''. As if only correlations could be thought of in probabilistic terms, Mermin continues to explain: ``It is only when one tries to go beyond their intersubsystem correlations to actual correlata ---particular values for the subsystem observables--- that noncommuting observables are incapable of sharing simultaneous physical reality." 

In the following section we will prove that by assuming correlata not only as real, but also as intrinsically probabilistic, we can, not only produce the same results as Mermin, but also escape Cabello's and Seevinck's no-go theorems. Following Mermin's desiderata according to which we should take as a starting point only the orthodox formalism of QM into account, we conclude that quantum correlata are themselves probabilistic right from the start. In this respect, the definition and understanding of quantum probability should be of primary concern, and not something we should try to bypass.

\section{Intensive Relata and Objective Probability}

Mermin makes two main presuppositions within his Ithaca interpretation of QM. The first is that `correlations are intrinsically probabilistic'; the second, that `correlata can only be understood in a classical manner'. Let us begin our analysis by the latter. 

Correlata are classically understood in terms of binary values. And these, as argued in \cite{deRondeMassri18a}, are the building blocks which ground our classical physical representation of the world in terms of the idea of an {\it Actual State of Affairs} (ASA): a closed system considered in terms of a set of actual (definite valued) properties which can be thought as a map from the set of properties to the $\{0,1\}$. Specifically,  an ASA is a function $\Psi: \mathcal{G}\rightarrow\{0,1\}$ from the set of properties to $\{0,1\}$ satisfying certain compatibility conditions. We say that the property $P\in \mathcal{G}$ is \emph{true} if $\Psi(P)=1$ and  $P\in \mathcal{G}$ is \emph{false} if $\Psi(P)=0$. The evolution of an ASA is formalized by the fact that the morphism $f$  satisfies $\Phi f=\Psi$. Diagrammatically, 
\[
\xymatrix{
\mathcal{G}_{t_1}\ar[dr]_{\Psi}\ar[rr]^f&&\mathcal{G}_{t_2}\ar[dl]^{\Phi}\\
&\{0,1\}
}
\]
Then, given that $\Phi(f(P))=\Psi(P)$, the truth of $P\in \mathcal{G}_{t_1}$ is equivalent to the truth of $f(P)\in \mathcal{G}_{t_2}$. This formalization comprises the idea that the properties of a system remain existent through the evolution of the system. The model allows then to claim that the truth or falsity of a property is independent of particular observations. Or in other words, binary-valuations are a formal way to capture the classical actualist (metaphysical) representation of physics according to which the properties of objects preexist to their measurement. From a realist perspective ---recalling an example given by Einstein to Pauli---, the moon has a position regardless of whether we choose to observe it or not. This is in fact the main presupposition of the realist stance, the idea that reality has an existence independent of particular observations and can be considered as ``detached'' from the subject performing measurements. However, when restricted to classical physics the claim becomes even more specific. It relates to a particular representation of physical reality in terms of an ASA. Something which is true in the particular cases of classical physical formalisms, all of which possess a commutative mathematical structure. 
\begin{theo} 
{\bf (Classical Binary Non-Contextuality)} Let $\Gamma$ be a classical phase space of any dimension. 
Then, there exists an Actual State of Affairs.
\end{theo}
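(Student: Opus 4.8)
The plan is to exhibit an explicit Actual State of Affairs built from a single phase-space point and then check that it meets the required compatibility conditions. First I would fix the set of properties $\mathcal{G}$ of the classical system: for a phase space $\Gamma$, a property is the proposition associated to a Borel subset $S\subseteq\Gamma$, so that $\mathcal{G}$ carries the structure of a Boolean algebra, with conjunction given by intersection, disjunction by union and negation by complement. This is exactly the commutative (distributive) structure alluded to in the paragraph preceding the statement, and it is available in any dimension, finite or infinite, since $\Gamma$ is a set and its Borel (or, more conservatively, its clopen) subsets always form a Boolean algebra.

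Next I would pick any point $p\in\Gamma$ --- read as a definite physical state of the system --- and define $\Psi_p\colon\mathcal{G}\to\{0,1\}$ by $\Psi_p(S)=\chi_S(p)$, i.e.\ $\Psi_p(S)=1$ iff $p\in S$. The point of the argument is that this valuation automatically respects all Boolean operations: $\Psi_p(\Gamma)=1$, $\Psi_p(\emptyset)=0$, $\Psi_p(S^{c})=1-\Psi_p(S)$, $\Psi_p(S\cap T)=\Psi_p(S)\,\Psi_p(T)$ and $\Psi_p(S\cup T)=\max\{\Psi_p(S),\Psi_p(T)\}$. In other words $\Psi_p$ is a two-valued Boolean homomorphism (equivalently, the principal ultrafilter of all properties containing $p$), and this is precisely what the compatibility conditions in the definition of an ASA demand: a single global, observation-independent assignment of truth values to every property at once, with the truth of a property preserved along any evolution that maps $p\mapsto f(p)$. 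Verifying these identities is a one-line check from the definition of the indicator function, so there is essentially no computational content.

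The only genuinely delicate aspect is conceptual rather than technical, and it is worth flagging: the whole construction hinges on the commutativity (distributivity) of the lattice $\mathcal{G}$ of classical properties, which is what makes evaluation-at-a-point simultaneously compatible with every meet and join. This is exactly the feature that fails for the non-distributive lattice of projections on a Hilbert space of dimension $\geq 3$, where the Kochen--Specker theorem obstructs precisely this kind of global binary valuation; so this classical statement should be read as the positive counterpart that the later sections will contrast with. A secondary subtlety is how large to take $\mathcal{G}$ in the infinite-dimensional case --- one may include all Borel sets, or restrict to a separating sub-algebra --- but since the argument uses only that $\mathcal{G}$ is a Boolean algebra of subsets of $\Gamma$ closed under the operations considered, the conclusion is insensitive to that choice. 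If one prefers a non-constructive phrasing, the same existence follows from the Stone representation theorem together with the fact that every non-trivial Boolean algebra admits a two-valued homomorphism, but the point-evaluation argument is sharper and fits the realist reading (``the moon has a position whether or not we choose to observe it'') invoked just before the statement.
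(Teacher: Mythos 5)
Your proposal is correct, and it is in fact more of a proof than the paper gives. The paper's own argument is a one-line remark: ``Classical observables commute. Hence, an ASA is the same as a Global Binary Valuation, that is, a function to $\{0,1\}$.'' This identifies what an ASA is in the commutative setting but never actually exhibits one, so strictly speaking it re-describes the object rather than establishing its existence. You supply the missing constructive content: take the Boolean algebra of (Borel) subsets of $\Gamma$ as $\mathcal{G}$, pick a point $p\in\Gamma$, and let $\Psi_p(S)=\chi_S(p)$. The verification that $\Psi_p$ is a two-valued Boolean homomorphism (equivalently, the principal ultrafilter at $p$) is exactly the ``compatibility conditions'' the definition of an ASA demands, and your observation that evolution $p\mapsto f(p)$ transports the valuation is the right reading of the commuting triangle in the paper's discussion of ASA evolution. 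Your flagged conceptual point --- that distributivity of the classical property lattice is precisely what makes evaluation-at-a-point globally consistent, and precisely what Kochen--Specker destroys for projection lattices in dimension $\geq 3$ --- is the same moral the paper intends, just made explicit. In short: same underlying idea (commutativity buys a global $\{0,1\}$-valuation), but your point-evaluation construction actually delivers the existence claim, whereas the paper's proof leaves it implicit; the Stone-representation alternative you mention is a legitimate non-constructive fallback but is not needed here.
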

\begin{proof}
Classical observables commute. Hence, an ASA is the same as a Global Binary Valuation, that is, 
a function to $\{0,1\}$.
\qed
\end{proof}

\smallskip 
\smallskip 

\noindent But now the following question rises: why should quantum correlata only be thought of in the classical terms of definite binary values?  

\smallskip 

After Cabello's and Seevinck's no-go theorems, Mermin abandoned the project supposed by his Ithaca relational realist interpretation of QM. He never wrote any public reply to the criticisms implied by Cabello and Seevinck. Seemingly accepting he had lost, that the Ithaca project was not reasonable, Mermin changed radically his expectations and viewpoint regarding the interpretation of QM. Leaving aside the Ithaca program Mermin decided to follow the Bayesian program of QM, defended mainly by Chris Fuchs and known as QBism. In the {\it Stanford Encyclopedia} entry on `QBism', Richard Healey \cite{Healey16} mentions ironically Mermin's change of mind: ``N. David Mermin (2014) became a convert more recently and has proposed extending the QBist vision of science to resolve at least one long-standing conceptual issue raised by classical physics.'' The term used by Healey might seem quite adequate when recognizing the radical turn of Mermin's ideas. In fact, Mermin's jump from the Ithaca desiderata to QBism can be already witnessed in his previous shift from the Einstenian requirements presented within the Ithaca interpretation to the acceptance, a few years later, of the Bohrian philosophy of physics \cite{Mermin04} which, as a matter of fact, denies explicitly almost all of Mermin's desiderata. As it has become clear, QBism can be characterized as a neo-Bohrian, anti-realist and instrumentalist account of QM \cite{QBism13, QBism15, Mermin14a, Mermin14b}. 

Mermin conversion to the Bayesian subjectivist interpretation of probability implied a difficult test: the abandonment of his own program to understand probability as objective. Strange as it is, there is no interpretation more distant from Mermin's original set of desiderata than QBism. Turning to the instrumentalist side, Mermin chose to embrace an account of QM that explicitly denies the reference of the theory to physical reality, stresses the centrality of what agents observe through measurements and understands quantum probability in strictly subjective terms. Exactly the opposite of the Ithaca quest.

\smallskip

\begin{center}
\includegraphics[width=24em]{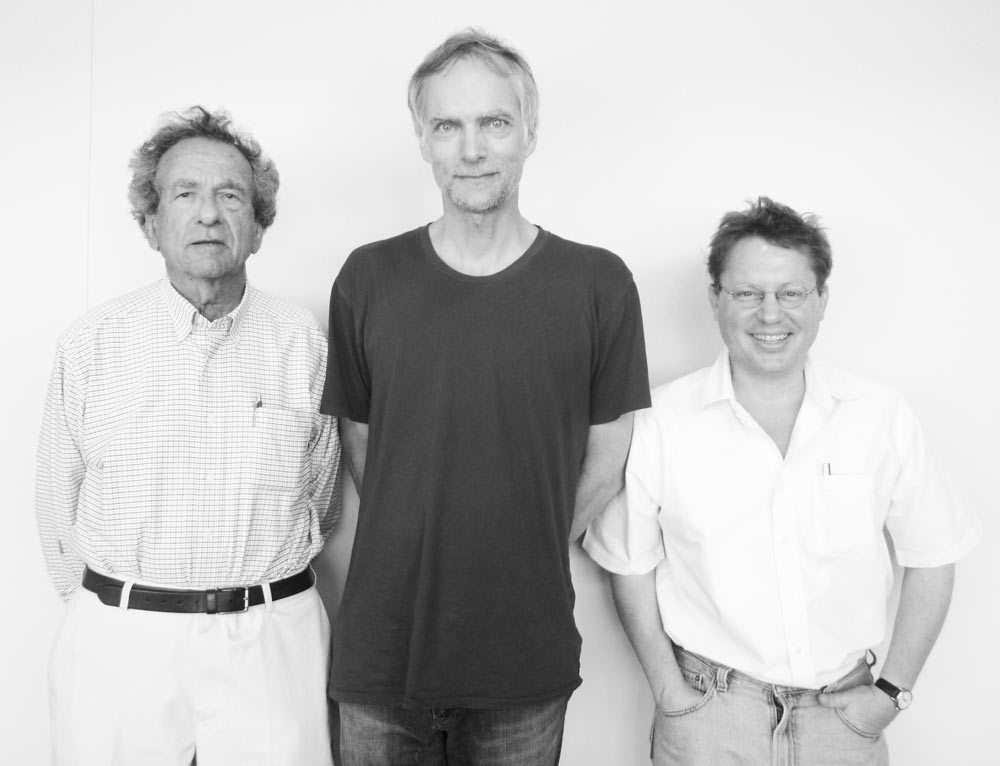}

\tiny{David Mermin, Ruediger Schack and Chris Fuchs (with a triumphant smile).}
\end{center}

However, regardless of Mermin's turn, we believe that the intuition that originated the Ithaca interpretation still deserves attention. As already said, we think that the main problem lies in the voluntary lack of definition of desideratum VI: Mermin does not define the meaning of quantum probability, he does not determine the independent and objective meaning of a probability that is specific to quantum theory, that has nothing ---as he correctly claims--- to do with ignorance, and differs radically from classical probability. He wishes to base his interpretation on quantum probability, but he does not provide an explicit definition, and consequently, he ends up resigning the possibility to found his interpretation on truly objective grounds. But, as we discussed above, when not defining the objectivity of quantum probability, his proposal was silently redirected to classical probability. And as we already know, when quantum probability is related to a two valued measure \cite{Svozil17} ---i.e., when considered as related to an underlying ASA---, quantum contextuality immediately pops in. 

\smallskip

The logos approach presented in \cite{deRondeMassri18a, deRondeMassri18b, deRondeMassri18c} embraces almost all of Mermin's original desiderata but takes distance from desiderata III and IV. While it agrees that the standpoint of any interpretation must be necessarily the formalism of QM itself, that probability is a central element of the theory and that measurement must be left out of the theoretical description, it rejects the metaphysical picture imposed by the notions of `system', `property' and `state'. According to the logos categorical approach to QM, the task we need to accomplish is to read out an objective physical representation of quantum physical reality from the orthodox mathematical formalism itself ---without adding anything by hand, not even `small systems'. Just like Einstein taught us to do in Relativity, we need to concentrate on what the theory predicts in operational terms, and be ready to come up with new physical concepts that match the formalism and explain phenomena. ``Concepts that have proven useful in ordering things easily achieve such an authority over us that we forget their earthly origins and accept them as unalterable givens. Thus they come to be stamped as `necessities of thought,' `a priori givens,' etc. The path of scientific advance is often made impossible for a long time  through such errors''  \cite{Howard10}. In QM we have a sound formalism, with features such as contextuality, superposition and indetermination, which clearly defy a realist classical scheme in terms of an actual, separable, substantial state of affairs. Unfortunately, all approaches until today have stood close to the same dogma: the idea that physical reality must be necessarily understood in terms of ``small particles''. Because of this, the features of the quantum formalism have been regarded as obstacles which we need to bypass or escape in order to restore our ``common sense'' classical way of thinking about {\it what there is}. This strategy has blocked the path to accept and understand what quantum physics seems to point out as objective. Our proposed line of research has been to turn this problem upside-down. Our strategy is not different from the one followed by Einstein in order to make sense of the Lorentz transformations. That is, to take as a standpoint the formalism and the predictive power of the theory, the features which make the core of the theory, those that made it so successful, surprising, and predictive and find out what they are talking about in conceptual terms. To take the formalism seriously, means for us to escape the imposition of our classical worldview with which we are most comfortable, and develop instead new physical concepts which can relate coherently to the formalism allowing us to represent and explain the phenomena in question. In this respect, just like in any other physical theory, we believe that the key to open up our understanding of the theory is the underlying invariant structure of the mathematical formalism.

As remarked by Max Born: ``the idea of invariant is the clue to a rational concept of reality, not only in physics but in every aspect of the world." The notion of invariance allows us to determine what is to be considered the same. In physics, invariants are quantities having the same value for different reference frames.

\smallskip 

\smallskip

\noindent {\it
{\bf Born Rule:} Given a vector $\Psi$ in a Hilbert space, the following rule allows us to predict the average value of (any) observable $P$. 
$$\langle \Psi| P | \Psi \rangle = \langle P \rangle$$
This prediction is independent of the choice of any particular basis.}

\smallskip 

\smallskip

\noindent The Born rule encapsulates the invariant structure of the orthodox quantum formalism. The invariant elements are the probability values, the statistical values. But this invariance is lost whenever we attempt to redirect these values to binary ones (0 and 1). Whenever we (metaphysically) assume that these statistical values are dependent on an underlying set of definite valued properties (i.e., an ASA) we are going beyond the orthodox formalism of QM. So, if we wish to stay close to the mathematical formalism, we should avoid metaphysical presuppositions and begin by considering only these invariant terms as the objective content of the theory. 

Until today, certainty and actuality ---even within developments of QM in terms of potentiality \cite{deRonde17b}--- have been applied as restrictive constraints of what must be considered as physically real. However, if we choose to stay close to the quantum formalism ---and consequently, to the Born rule--- we do not need to advance in this direction. To take seriously the orthodox quantum formalism means for us to take into account all the predictions provided by the theory; i.e., both certain predictions (probability equal to unity) and statistical predictions (probability between zero and unity). And this means we need to create a new understanding of probability in terms of objective knowledge abandoning our classical understanding of probability in terms of ignorance about an actual state of affairs \cite{deRonde16a}. But how to do so in relation to physical reality? We believe that a good standpoint is the generalization of EPR's \cite{EPR} famous definition of an element of physical reality.

\smallskip 

\smallskip 

\noindent {\bf Element of Physical Reality (Binary Relata):} {\it If, without in any way disturbing a system, we can predict with certainty (i.e., with probability equal to unity) the value of a physical quantity, then there exists an element of reality corresponding to that quantity.}

\smallskip 

\smallskip 

\noindent The redefinition must keep the relation imposed between operational predictive statements and reality, but leaving aside both  intromission of the notion of measurement (desideratum II) as well as the the actualist constraint imposed by certainty understood as probability equal to unity (see for a detailed analysis \cite{deRonde16a}).

\smallskip 

\smallskip 

\noindent {\bf Generalized Intensive Relata:} {\it If we can predict in any way (i.e., both probabilistically or with certainty) the value of a physical quantity, then there exists an element of reality corresponding to that quantity.}

\smallskip 

\smallskip

\noindent By extending the limits of what can be considered as physically real, we have also opened the door not only to a new understanding of QM beyond the representation provided by classical metaphysics in terms of systems composed by definite valued properties (see in this respect \cite{deRonde17a}), but also, to a new understanding of probability. 

Probability played an essential role in QM since Born's 1926 interpretation of Schr\"odinger's quantum wave function. The difficulties to interpret quantum probability were already explicit in Born's original paper \cite{WZ}. The wave function was associated to the probabilistic prediction that a particle would be found within a particular region of space. However, even though this interpretation worked fairly well, it soon became evident that the concept of probability in the new theory departed substantially from that of classical statistical mechanics, understood as lack of knowledge about a preexistent (actual) state of affairs described in terms of definite valued properties. As noticed by Schr\"odinger \cite[p. 115]{Bub97} in a letter to Einstein: ``It seems to me that the concept of probability [related to quantum theory] is terribly mishandled these days. Probability surely has as its substance a statement as to whether something is or is not the case –an uncertain statement, to be sure. But nevertheless it has meaning only if one is indeed convinced that the something in question quite definitely is or is not the case. A probabilistic assertion presupposes the full reality of its subject.'' Schr\"odinger knew very well that in QM it is not possible to assign a definite value to all properties of a quantum state. As he \cite[p. 156]{WZ}  remarked: ``[...] if I wish to ascribe to the [quantum mechanical] model at each moment a definite (merely not exactly known to me) state, or (which is the same) to all determining parts definite (merely not exactly known to me) numerical values, then there is no supposition as to these numerical values to be imagined that would not conflict with some portion of quantum theoretical assertions." This impossibility would be exposed three decades after in formal terms by Simon Kochen and Ernst Specker in their famous no-go theorem \cite{KS}. But then, the question raises: what is the meaning of a concept of probability which is objective but does not talk about the degree of knowledge about an actual state of affairs? In the following, we attempt to answer this question by defining explicitly a notion of objective probability which does not make reference to an actual state of affairs but proposes instead a new representation of physical reality.  

By considering our generalized definition of intensive relata, we can provide a generalized notion of valuation which goes beyond the restrictive binary valuation imposed by actualist metaphysics \cite{deRondeMassri18a}.

\smallskip 

\smallskip

\noindent{\it
{\bf Global Intensive Valuation:} 
A Global Intensive Valuation (GIV) is a function from a graph to  the closed interval $[0,1]$, that is, a GIV is an object in $\mathcal{G}ph|_{[0,1]}$.}

\smallskip 

\smallskip 

\noindent Taking now our intensive valuation as a standpoint, we are ready to discuss a new representation of physical reality beyond the notion of {\it Actual State of Affairs} (ASA) relating the set of properties $\mathcal{G}$ with the truth values $\{0,1\}$ according to the function $\Psi:\mathcal{G}\rightarrow\{0,1\}$.  To give the definition of a {\it Potential State of Affairs} (PSA), following \cite{deRondeMassri18a}, we need to introduce the graph of observables. Let $\mathcal{H}$ be a Hilbert space and let $\mathcal{G}=\mathcal{G}(\mathcal{H})$  be the set of observables. We give to $\mathcal{G}$ a graph structure by assigning an edge between observables $P$ and $Q$ if and only if $[P,Q]=0$. Among all global intensive valuations we are interested in the particular class of PSA.
\begin{definition}
Let $\mathcal{H}$ be a Hilbert space.
A \emph{Potential State of Affairs}\footnote{A similar definition is discussed in \cite{Kalmbach}.} is a global intensive valuation
$\Psi:\mathcal{G}(\mathcal{H})\to[0,1]$ from the graph of observables $\mathcal{G}(\mathcal{H})$
such that $\Psi(I)=1$ and 
\[
\Psi(\sum_{i=1}^{\infty} P_i)=
\sum_{i=1}^\infty \Psi(P_i)\]
for any piecewise orthogonal projections $\{P_i\}_{i=1}^{\infty}$.
The numbers $\Psi(P) \in [0,1]$, are called {\it intensities} or {\it potentia}
and the nodes $P$ are called \emph{immanent powers} (or \emph{powers}).
Hence, a PSA assigns a potentia to each power.
Notice that the definition of PSA is non-contextual. 
\end{definition}
Intuitively, we can picture a PSA
as a table,
\[
\Psi:\mathcal{G}(\mathcal{H})\rightarrow[0,1],\quad
\Psi:
\left\{
\begin{array}{rcl}
P_1 &\rightarrow &p_1\\
P_2 &\rightarrow &p_2\\
P_3 &\rightarrow &p_3\\
  &\vdots&
\end{array}
\right.
\]

\begin{theo}\label{teo1}
Let $\mathcal{H}$ be a separable Hilbert space, $\dim(\mathcal{H})>2$ and let $\mathcal{G}$ be the graph of immanent powers with the commuting relation given by QM.
\begin{enumerate}
\item Any positive semi-definite self-adjoint operator 
of the trace class $\rho$ determines in a bijective way
a PSA $\Psi:\mathcal{G}\to [0,1]$. 
\item Any GIV determines univocally a 
function $\mathcal{G}\to\{0,1\}$
such that the set of powers are considered as potentially existent. 
\end{enumerate}
\end{theo}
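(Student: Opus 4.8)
The first part of Theorem \ref{teo1} is essentially Gleason's theorem in disguise, so the strategy is to exhibit the correspondence $\rho \leftrightarrow \Psi$ explicitly and then quote Gleason to get bijectivity. Given a positive semi-definite trace-class operator $\rho$ with $\mathrm{Tr}(\rho)=1$, I would define $\Psi_\rho:\mathcal{G}\to[0,1]$ by $\Psi_\rho(P)=\mathrm{Tr}(\rho P)$ on projections, extended to all self-adjoint operators via the Born rule $\Psi_\rho(P)=\langle P\rangle_\rho$. First I would check that $\Psi_\rho$ is a PSA: $\Psi_\rho(I)=\mathrm{Tr}(\rho)=1$ is immediate, and $\sigma$-additivity on piecewise orthogonal families follows from the continuity of the trace together with $\sum_i P_i$ converging strongly. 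This shows the map $\rho\mapsto\Psi_\rho$ lands in the set of PSAs. For injectivity, note that if $\mathrm{Tr}(\rho_1 P)=\mathrm{Tr}(\rho_2 P)$ for all projectors $P$, then the two operators agree on rank-one projectors and hence are equal. For surjectivity — that every PSA arises this way — this is exactly the statement of Gleason's theorem, which requires $\dim(\mathcal{H})>2$ (this is where that hypothesis is used) and separability for the infinite-dimensional version: any $\sigma$-additive measure on the projection lattice normalized to $1$ at $I$ is of the form $P\mapsto\mathrm{Tr}(\rho P)$ for a unique density operator $\rho$.

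**Second part.** For item (2), given an arbitrary GIV $\Psi:\mathcal{G}\to[0,1]$ (not necessarily a PSA), I would construct the associated function $v_\Psi:\mathcal{G}\to\{0,1\}$ by setting $v_\Psi(P)=1$ precisely when $\Psi(P)>0$ (or equivalently when $\Psi(P)\neq 0$), and $v_\Psi(P)=0$ when $\Psi(P)=0$. The interpretive claim is that $v_\Psi(P)=1$ records that the power $P$ is \emph{potentially} existent — i.e.\ has non-null potentia — rather than \emph{actually} existent. I would verify that this assignment is well-defined and that it faithfully encodes the support of the intensive valuation; the key point to make explicit is that, unlike a binary valuation read as an ASA, this $\{0,1\}$-valued function is not required to respect the functional-composition or sum constraints that Kochen–Specker obstructs, precisely because its meaning is ``has non-zero potentia'' and not ``is true''. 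One should remark that $v_\Psi$ is recovered univocally from $\Psi$ (the construction involves no choices), which is what ``determines univocally'' asserts.

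**Main obstacle.** The genuinely nontrivial content is the surjectivity half of part (1), and there the obstacle is entirely in invoking Gleason's theorem correctly: one must use the separability hypothesis to pass from finitely-additive to $\sigma$-additive measures cleanly, handle the trace-class (rather than merely bounded) nature of $\rho$ in the infinite-dimensional case, and confirm that our notion of PSA — with $\sigma$-additivity over piecewise orthogonal projection families — matches the hypothesis of the measure-theoretic Gleason statement. The rest (well-definedness of $\Psi_\rho$, injectivity, and the construction in part (2)) is routine. I would therefore structure the write-up so that part (1) reduces in one or two lines to a citation of Gleason, with the bulk of the argument being the unproblematic verification that $\Psi_\rho$ satisfies the PSA axioms.
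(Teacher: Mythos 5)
The paper does not actually prove Theorem~\ref{teo1}; its ``proof'' is a bare citation to \cite{deRondeMassri18a}, so there is no in-text argument to compare yours against line by line. That said, your reconstruction of part~(1) is the standard and almost certainly intended route: the PSA axioms ($\Psi(I)=1$ plus $\sigma$-additivity over piecewise orthogonal projections) are precisely the hypotheses of the measure-theoretic Gleason theorem, and $\dim(\mathcal{H})>2$ together with separability are exactly the hypotheses Gleason needs, so surjectivity and uniqueness come for free once you check that $\rho\mapsto\mathrm{Tr}(\rho\,\cdot)$ lands in the set of PSAs. One point you should make explicit rather than slip in silently: the theorem as stated quantifies over \emph{all} positive semi-definite trace-class operators, and the correspondence cannot be bijective at that level of generality (any positive scalar multiple of $\rho$ would have to give a different PSA, yet $\Psi(I)=1$ forces normalization). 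Your added hypothesis $\mathrm{Tr}(\rho)=1$ is the correct repair, but it is a repair of the statement, not a consequence of it, and deserves a sentence saying so.

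For part~(2) the statement is too underspecified to admit a unique proof, and your reading is one of (at least) two defensible ones. You take the $\{0,1\}$-valued function to be the indicator of the support of $\Psi$ ($v_\Psi(P)=1$ iff $\Psi(P)>0$); the phrasing ``such that the set of powers are considered as potentially existent'' can equally be read as assigning $1$ to \emph{every} power (all powers exist potentially, independently of their intensity), which is what the Logos papers appear to intend. Both constructions are trivially well defined and both evade Kochen--Specker for the reason you correctly identify --- the resulting binary function is not required to satisfy the functional-composition or sum constraints, since its meaning is modal rather than truth-functional --- so nothing in your argument breaks; but you should flag the interpretive choice rather than present the support function as the unique reading of ``determines univocally.''
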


\begin{proof}
See \cite{deRondeMassri18a}. 
\qed
\end{proof}

\smallskip 

\smallskip

Once we consider intensive values as accurate descriptions of quantum physical reality, we can understand quantum probability as an objective way of computing (intensive) physical quantities. Quantum probability becomes then an objective measure of a (potential) state of affairs ---described in terms of intensive quantities. It is in this way that we can finally fulfill Mermin's original project and understand quantum probability as an objective notion.

\section{Escaping Cabello's and Seevinck's No-Go Theorems}

We are now ready to reconsider Cabello's and Seevinck's no-go theorems and understand more deeply, not only how they were able to debunk Mermin's relational approach, but also, how our new intensive-relational scheme grounded on an explicit definition of {\it objective probability} is able to bypass these theorems. The key distinction we have introduced at the correlational level is ---like in the level of relata--- between {\it binary correlation} and {\it intensive correlation}. These two different type of correlations can be defined in the following manner:
\begin{definition}
{\bf Binary Correlation:} A correlation defined in terms of definite binary values $\{ 0, 1\}$.
\end{definition}
\begin{definition}
{\bf Intensive Correlation:} A correlation defined in terms of an intensive value pertaining to interval $[0, 1]$.
\end{definition}

We can now understand Cabello's KS type theorems as redirecting Mermin's correlations first to {\it binary relata} and then to {\it binary correlations}. Mermin escape route is to say that one should consider, instead of {\it binary correlations} between definite valued properties, joint probability distributions. However, since he refuses to provide a definition of objective probability, the meaning of joint probability distribution also remains undefined. This indefinition is then used by Seevinck in order to redirect these {\it statistical correlations} (joint probability distributions) to its orthodox (classical) definition in terms of {\it binary correlations}. Dragging in this way Mermin's relational interpretation back into the classical probability of two valued measures \cite{Svozil17}. 
\begin{center}
{\it Bell Statistical Correlations $\Leftrightarrow$ Binary Correlations $\Leftrightarrow$ Binary Relata}
\end{center}

\noindent Having made this crucial step, it is then relatively easy to show that quantum statistical correlations are intrinsically contextual and consequently cannot be considered as preexistent {\it elements of physical reality}. On the very contrary, {\it intensive correlations} do escape both KS's theorem and Bell inequalities. The reason is simple. On the one hand, intensive correlations can be defined globally in a contextual manner without inconsistencies \cite{deRondeMassri18a}; on the other hand, it is not possible to derive a Bell inequality from intensive correlations ---classical probability is not intensive but binary.   
\begin{center}
{\it Intensive Correlations $\nRightarrow$ Bell Statistical Correlations} 
\end{center}

\noindent To sum up, Mermin can argue, against Cabello, that he is not talking about correlations in terms of definite (binary) values of properties, but instead, in terms of joint probability distributions. That's fair. But if he does not provide an account of the reality of those probabilities, if he does not define a way in which we could be realist about those probabilities, then it is legitimate, as Seevinck does, to interpret the joint probability distribution in terms of the already defined classical ontology of actual preexistent properties. That is, to think of the quantum probability still as dependent of an actual state of affairs ---like in the classical probabilistic case. Mermin argues that ``the no-hidden-variables theorems'' tell us that the problem arises when we try to go from correlations to particular observables. What we have shown is that, in fact, the problem arises when we try to go from intensive values, to binary ones. Let us provide a mathematical explanation of what is going on.

Let $\Psi_1:\mathcal{G}_1\to[0,1]$ and $\Psi_2:\mathcal{G}_2\to[0,1]$ be two PSAs. We say that $\Psi_1$ and $\Psi_2$ are the \emph{ shadows} of the PSA $\Psi$ (abbreviated $Sh_{1}[PSA]$ and $Sh_{2}[PSA]$)
if there exist functions such that the following diagram commutes,
\[
\xymatrix{
&\mathcal{G}\ar[dr]^{T_2}\ar[dl]_{T_1}\ar[dd]^{\Psi}\\
\mathcal{G}_{1}\ar[dr]_{\Psi_1}&&\mathcal{G}_{2}\ar[dl]^{\Psi_2}\\
&[0,1]
}
\]

\noindent Let us remark that the relation implicitly imposed by Mermin's desideratum IV, between a {\it system} and its {\it subsystems}, does not capture the mathematical operation of {\it tracing} nor the idea of a {\it reduced state}. It seems to us that a more adequate notion is that of {\it shadow} which not only bypasses the (metaphysical) reference to {\it systems} and {\it subsystems} but also allows us to understand intuitively the mathematical relation between a given $\rho$ and its trace, Tr($\rho$). This also explains the inadequacy of a mereological analysis of QM which presupposes that the Hilbert formalism can be naturally understood in terms of `parts' and `wholes'. In contraposition, it is quite clear why a shadow cannot be adequately understood as ``a part of an object''.

Generalizing the previous definition, we say that $\Psi_1,\ldots,\Psi_n$ are shadows of the original PSA, $\Psi$, if there exist functions $T_1,\ldots,T_n$ such that for all $i=1,\ldots,n$ the diagrams commute
\[
\xymatrix{
&\mathcal{G}\ar[dl]_{T_i}\ar[dd]^{\Psi}\\
\mathcal{G}_{i}\ar[dr]_{\Psi_i}\\
&[0,1]
}
\]

Let $\rho$ be a density matrix over a Hilbert space  $\mathcal{H}=\mathcal{H}_1\otimes\mathcal{H}_2$ and let $\mathcal{G}$ be the graph of immanent powers on $\mathcal{H}$. As we proved in \cite{deRondeMassri18a}, it is possible to assign a global intensive valuation to $\mathcal{G}$ in order to define a PSA $\Psi:\mathcal{G}\to[0,1]$. Consider now $\rho_1=\mbox{Tr}_1(\rho)$ and $\rho_2=\mbox{Tr}_2(\rho)$ the reduced density matrices over $\mathcal{H}_1$ and $\mathcal{H}_2$ respectively (the notations $\mbox{Tr}_1$ and  $\mbox{Tr}_2$ means partial trace). Then, we have the following theorem:

\begin{theo}[Non-Contextual Intensive Correlations] It is possible to define the shadows of a PSA in a non-contextual way; i.e., it is possible to assign an intensive value to each projector in $\mathcal{H}_1$ and $\mathcal{H}_2$ in a consistent way without breaking the inner correlations.
\end{theo}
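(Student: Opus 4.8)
The plan is to produce the two shadows explicitly as the reduced-state intensive valuations, and then to check separately the three things the statement packages together: that each shadow is a genuine PSA (hence, by its very definition, non-contextual), that the shadows sit over the global $\Psi$ as required, and that the inner correlations are left untouched.

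First I would recall, via Theorem~\ref{teo1}(1), that the density matrix $\rho$ on $\mathcal{H}=\mathcal{H}_1\otimes\mathcal{H}_2$ is precisely the PSA $\Psi:\mathcal{G}\to[0,1]$ given by $\Psi(P)=\mathrm{Tr}(\rho P)$, with $\mathcal{G}=\mathcal{G}(\mathcal{H})$ the graph of immanent powers. I would then consider the ampliation morphisms $\iota_1:\mathcal{G}(\mathcal{H}_1)\to\mathcal{G}$, $P\mapsto P\otimes I$, and $\iota_2:\mathcal{G}(\mathcal{H}_2)\to\mathcal{G}$, $Q\mapsto I\otimes Q$: each sends $I$ to $I$, preserves commutation, and carries a piecewise orthogonal family of projections to a piecewise orthogonal family. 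Setting $\Psi_1:=\Psi\circ\iota_1$ and $\Psi_2:=\Psi\circ\iota_2$, it is immediate that $\Psi_i(I)=1$ and that $\Psi_i$ is countably additive on orthogonal families, so each $\Psi_i$ is a PSA on $\mathcal{G}(\mathcal{H}_i)$; and the defining property of the partial trace gives $\Psi_1(P)=\mathrm{Tr}(\rho(P\otimes I))=\mathrm{Tr}(\rho_1 P)$ and $\Psi_2(Q)=\mathrm{Tr}(\rho_2 Q)$, so $\Psi_1,\Psi_2$ are exactly the intensive valuations carried by the reduced states $\rho_1,\rho_2$. This already realizes $\Psi_1,\Psi_2$ as the shadows $Sh_1[\Psi]$, $Sh_2[\Psi]$, the partial trace (equivalently, the ampliation embeddings) furnishing the structure maps of the shadow diagram.

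Second, non-contextuality comes for free. A PSA assigns, by definition, one value $\Psi_i(P)\in[0,1]$ to each node $P$ of $\mathcal{G}(\mathcal{H}_i)$ with no reference to any maximal commuting context containing $P$ --- this is exactly the remark ``the definition of PSA is non-contextual'' recorded just after the definition, and it is underwritten by the Gleason-type content of Theorem~\ref{teo1}(1). Hence the intensive value attached above to every projector of $\mathcal{H}_1$ and of $\mathcal{H}_2$ is context-independent by construction, and no Kochen--Specker-type obstruction can arise, because we never contract the interval $[0,1]$ down to $\{0,1\}$ --- and it is precisely that contraction that drives the contradiction in Cabello's six-particle construction.

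Third, the inner correlations are not broken. The correlation content of the composite state is the family $\{\,\Psi(P\otimes Q)=\mathrm{Tr}(\rho(P\otimes Q)):P\in\mathcal{G}(\mathcal{H}_1),\ Q\in\mathcal{G}(\mathcal{H}_2)\,\}$, every member of which is a value of the one global PSA $\Psi$ on $\mathcal{G}$; so the two shadows together with all the cross-values $\Psi(P\otimes Q)$ form a single mutually consistent, context-free intensive table whose restrictions to the first and second factors are $\Psi_1$ and $\Psi_2$. Moreover, by the SSC Theorem~\ref{ssc} that same family already determines $\rho$ completely, so no information about the correlations is lost in passing from $\rho$ to its pair of shadows together with the cross-correlations. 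I expect the only delicate point to be conceptual rather than computational, namely articulating why non-contextuality is preserved under the shadow operation; but this collapses to the two facts already used --- $\iota_i$ is a single honest graph morphism and PSAs are non-contextual by construction --- so there is no combinatorial identity left to grind through, which is exactly where the intensive setting parts ways with the binary one treated by Cabello and Seevinck.
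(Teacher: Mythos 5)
Your proposal is correct and follows essentially the same route as the paper: the paper's proof is exactly the definition $\Psi_1(P):=\mbox{Tr}(\rho_1 P)=\mbox{Tr}(\rho\, P\otimes\mathbb{I})$, $\Psi_2(Q):=\mbox{Tr}(\rho_2 Q)=\mbox{Tr}(\rho\,\mathbb{I}\otimes Q)$ followed by the observation that this is globally defined and preserves the compatibility conditions, which you reproduce and merely flesh out (PSA axioms, SSC). The one small mismatch is that your structure maps $\iota_i:\mathcal{G}(\mathcal{H}_i)\to\mathcal{G}$ are ampliations, whereas the paper's shadow diagram uses maps $T_i:\mathcal{G}\to\mathcal{G}_i$ in the opposite direction (explicitly described as projections, not inclusions), though the paper's own proof does not construct the $T_i$ either, so this does not affect the substance.
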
 

\noindent  {\it Proof:} Straightforward. 
Define $\Psi_1:\mathcal{G}_1\to[0,1]$ 
and $\Psi_2:\mathcal{G}_2\to[0,1]$ according to the Born rule,
\[
\Psi_1(P):=\mbox{Tr}(\rho_1\cdot\, P)=
\mbox{Tr}(\rho\cdot\, P\otimes\mathbb{I})
,\quad\mbox{and}\quad
\Psi_2(Q):=\mbox{Tr}(\rho_2\cdot\, Q)=
\mbox{Tr}(\rho\cdot\, \mathbb{I}\otimes Q).
\]
Clearly, this definition is non-contextual (i.e., globally defined) and  preserves all the compatibility conditions. This proof can be generalized to deal with $n>2$.
\qed 

\smallskip

\smallskip

\noindent As mentioned above, the reference to \emph{systems} and \emph{subsystems}, from a mathematical point of view, is completely misleading. In the previous diagram it is evident that the maps $T_i$ are projections instead of inclusions. In other words, the partial trace is, in general, not injective.

\smallskip

Let $P$ and $Q$ be powers in $\mathcal{H}_1$ and
$\mathcal{H}_2$ respectively. We call the quantity $\mbox{Tr}(\rho\cdot P\otimes Q)$ a \emph{reduced correlation}.\footnote{In \cite[Appendix A]{Mermin98b} Mermin calls it \emph{subsystem correlation.}}
\begin{theo}[SSC Theorem]
Let $\rho$ be a density matrix. 
Then, the PSA $\Psi:\mathcal{G}\to[0,1]$ can be obtained by using only
reduced correlations (compare this statement with Theorem \ref{ssc}).
\end{theo}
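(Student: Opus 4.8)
The plan is to deduce this statement by chaining together the two results already at our disposal. Theorem \ref{teo1}(1) tells us that a density matrix $\rho$ and the PSA $\Psi:\mathcal{G}\to[0,1]$ it induces are in bijective correspondence, the PSA being given explicitly by the Born rule $\Psi(R)=\mbox{Tr}(\rho\cdot R)$ for every power $R\in\mathcal{G}$. Theorem \ref{ssc} (the SSC Theorem) tells us that the family of reduced correlations $\{\mbox{Tr}(\rho\cdot P\otimes Q)\,\colon\,P,Q\mbox{ projectors in }\mathcal{H}_1,\mathcal{H}_2\}$ pins $\rho$ down uniquely. Composing, the assignment ``reduced correlations'' $\rightsquigarrow\rho\rightsquigarrow\Psi$ is well defined, so $\Psi$ is a function of the reduced correlations alone; this already proves the statement at the level of existence and uniqueness.

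First I would make the reconstruction explicit, so that $\Psi$ is literally computed from the reduced data rather than merely ``determined'' by it. By the spectral theorem every bounded self-adjoint operator on $\mathcal{H}_i$ is a (norm-)limit of real linear combinations of projectors, and since the product operators $P\otimes Q$ densely span the operators on $\mathcal{H}_1\otimes\mathcal{H}_2$, any power $R\in\mathcal{G}$ can be written as $R=\sum_\alpha c_\alpha\,P_\alpha\otimes Q_\alpha$ (a finite sum in the finite-dimensional case, a convergent series otherwise), with $P_\alpha,Q_\alpha$ projectors and $c_\alpha\in\mathbb{R}$. Applying the Born rule and linearity of the trace,
\[
\Psi(R)=\mbox{Tr}(\rho\cdot R)=\sum_\alpha c_\alpha\,\mbox{Tr}(\rho\cdot P_\alpha\otimes Q_\alpha),
\]
which exhibits $\Psi(R)$ as a combination of reduced correlations only; the coefficients $c_\alpha$ depend on the chosen expansion of $R$, not on $\rho$, so the right-hand side depends on $\rho$ exclusively through the reduced correlations.

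The only point requiring care is the passage to the limit in the separable infinite-dimensional case, where the expansion of $R$ in product projectors is an operator-norm (or weak-$*$) limit rather than a finite sum. Here I would use that $\rho$ is trace class, so the functional $X\mapsto\mbox{Tr}(\rho\cdot X)$ is continuous on $B(\mathcal{H})$ for the relevant topology, which lets us interchange the limit with the trace and conclude that the displayed identity persists as a convergent series. I expect this to be the main (and essentially the only) obstacle, and it is already implicitly absorbed into the SSC Theorem, whose statement asserts in precisely this generality that the reduced correlations ``determine the matrix $\rho$ completely''. Hence invoking Theorem \ref{ssc} together with Theorem \ref{teo1} suffices, and the explicit formula above is a convenience rather than a necessity.
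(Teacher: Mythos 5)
Your proposal is correct and follows essentially the same route as the paper, which gives no independent argument but simply defers to Mermin's Appendix A in \cite{Mermin98b} (adding only the remark that, geometrically, the result follows from the non-degeneracy of the Segre variety): the intended content is precisely your composition of Theorem \ref{ssc} (reduced correlations determine $\rho$) with Theorem \ref{teo1} ($\rho$ determines the PSA bijectively via the Born rule). The explicit reconstruction you supply --- expanding any power $R$ as a limit of real linear combinations of product projectors $P_\alpha\otimes Q_\alpha$ and using trace-norm continuity of $X\mapsto\mbox{Tr}(\rho\, X)$ to pass to the limit --- is exactly the substance of the cited appendix, so you have in effect written out the proof the paper outsources.
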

\noindent  {\it Proof:} See \cite[Appendix A]{Mermin98b}.
From a geometrical point of view, this theorem follows from 
the fact that the Segre variety is non-degenerate.
\qed 

\smallskip
\smallskip

\noindent This theorem, which is in fact just a restatement of Mermin's so called SSC theorem, restores the possibility of an objective physical representation of quantum correlations.
\begin{coro}
All intensive correlations of a PSA can be considered as objective elements of reality.
\end{coro}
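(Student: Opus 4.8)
The plan is to read this corollary as an immediate consequence of the three results just established --- Theorem~\ref{teo1}, the Non-Contextual Intensive Correlations theorem, and the restated SSC theorem --- together with the \emph{Generalized Intensive Relata} criterion of Section~4. First I would observe that a reduced correlation is not an extraneous quantity but a \emph{potentia} of a power already living in the graph $\mathcal{G}$ of the whole Hilbert space: given powers $P$ in $\mathcal{H}_1$ and $Q$ in $\mathcal{H}_2$, the projectors $P\otimes\mathbb{I}$ and $\mathbb{I}\otimes Q$ commute, so $P\otimes Q$ is itself a projector, hence a node of $\mathcal{G}$, and the PSA $\Psi:\mathcal{G}\to[0,1]$ associated to $\rho$ by Theorem~\ref{teo1} assigns to it the intensity $\Psi(P\otimes Q)=\mbox{Tr}(\rho\, P\otimes Q)$. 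Thus every intensive correlation is literally the intensity $\Psi(R)\in[0,1]$ of an immanent power $R=P\otimes Q$.

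Next I would invoke the \emph{Generalized Intensive Relata} criterion: since the Born rule lets us predict the value $\mbox{Tr}(\rho\, P\otimes Q)$ --- probabilistically, without any recourse to measurement and without the actualist demand that the prediction be certain --- there exists an element of reality corresponding to that quantity. The only thing that could block this conclusion would be a consistency obstruction of the Cabello/Seevinck type, and this is precisely what the Non-Contextual Intensive Correlations theorem rules out: it shows the shadows $\Psi_1,\Psi_2$ and the reduced correlations between them can all be assigned globally (non-contextually) via $\Psi_i(\,\cdot\,)=\mbox{Tr}(\rho_i\,\cdot\,)$ without breaking the inner correlations. The SSC theorem then closes the circle, since these reduced correlations are not a lossy description but in fact recover the whole PSA $\Psi$, and hence, by the bijection of Theorem~\ref{teo1}, the density matrix $\rho$ itself.

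Putting the pieces together, each intensive correlation of a PSA is simultaneously (i) a well-defined prediction of the orthodox formalism, (ii) assignable non-contextually and compatibly with all the orthogonality conditions, and (iii) jointly sufficient, with the remaining reduced correlations, to fix the full state; by the Generalized Intensive Relata criterion it therefore qualifies as an objective element of physical reality. I expect the only delicate point to be conceptual rather than technical: one must make explicit that the move which triggers Cabello's and Seevinck's no-go theorems --- collapsing $[0,1]$ onto $\{0,1\}$ --- is never performed here, so that the implication \emph{Intensive Correlations} $\nRightarrow$ \emph{Bell Statistical Correlations} holds and no Bell inequality or Kochen--Specker coloring can even be formulated at the intensive level. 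Everything else is a direct appeal to the results already in hand.
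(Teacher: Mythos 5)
Your argument is correct and reaches the same conclusion, but it is worth noting how little the paper itself does here: its entire proof of this corollary is the single line ``Follows from the previous Theorem,'' deferring everything to the intensive restatement of the SSC theorem. What you do differently is make the hidden logical structure explicit, and this is an improvement in two respects. First, you correctly identify that the SSC theorem alone cannot yield the corollary: the passage from ``the reduced correlations $\mbox{Tr}(\rho\, P\otimes Q)$ are well-defined, non-contextually assignable intensities that jointly determine $\rho$'' to ``they are objective elements of reality'' is not a mathematical deduction but an application of the \emph{Generalized Intensive Relata} criterion, which is a definition adopted in Section~4; the paper leaves this interpretive premise tacit, and you surface it as the genuinely load-bearing step. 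Second, your observation that $P\otimes Q=(P\otimes\mathbb{I})(\mathbb{I}\otimes Q)$ is itself a projector and hence a node of the global graph $\mathcal{G}$, so that each reduced correlation is literally a value $\Psi(R)$ of the globally (hence non-contextually) defined PSA, is the cleanest way to see why no Kochen--Specker colouring or Bell inequality can even be formulated at the intensive level --- cleaner, in fact, than routing through the shadows $\Psi_1,\Psi_2$, which are marginals and do not by themselves carry the joint quantities. Nothing in your proposal is mistaken; it simply proves carefully what the paper merely asserts, at the acceptable cost of invoking two auxiliary results (the bijection of Theorem~\ref{teo1} and the non-contextuality theorem) that the paper's one-line proof does not cite.
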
 
\noindent  {\it Proof:} Follows from the previous Theorem.\qed 

\smallskip
\smallskip

Even thought it is limited by the no-go theorems of Cabello and Seevinck, we can now understand more clearly why the SSC theorem is still correct. The SSC theorem does not make reference to binary correlations, it makes reference to intensive correlations; and that is why an interpretation in terms systems and subsystems becomes untenable. Instead of imposing an {\it inclusion relation} ---implicit when making reference to systems and subsystems--- one requires the {\it reduction relation} discussed above ---between a PSA and its {\it shadows}.

\section{Reconsidering Mermin's Ideas From an Intensive Viewpoint}  

Following Mermin's original desiderata (exception made of desiderata III and IV), we have presented an objective relational interpretation of QM grounded on its mathematical formalism and suplemmented an explicit definition of {\it objective probability}. Instead of assuming a metaphysical particle picture which imposes a definition of physical reality in terms of {\it binary certainty}, we have considered the Born rule as pointing towards the need of considering an {\it intensive} understanding of certainty. In turn, by replacing the reference of the theory from measurement outcomes to intensive relations, we have shown how our relational approach avoids ---in a natural manner--- falling pray of Cabello's and Seevincks's no-go theorems. Our approach differs significantly from the Everettian proposal presented by Chris Timpson and Harvey Brown in \cite{TimpsonBrown10} who also claim to provide a realist relational approach which escapes Cabello's and Seevinck's no-go theorems. A comparison between these two different approaches exceeds the scope of the present paper which we leave for a future work.

\section*{Acknowledgements} 

C. de Ronde wants to thank Michiel Seevinck for the many discussions regarding the meaning of Bell inequalities and quantum contextuality. This work was partially supported by the following grants: FWO project G.0405.08 and FWO-research community W0.030.06. CONICET RES. 4541-12 (2013-2014).


\begin{thebibliography}{1}

\bibitem{quasitopoi} Ad{\'a}mek J. and Herrlich, H., 1986, ``Cartesian closed categories, quasitopoi and topological universes.'' {\em Comment. Math. Univ. Carolin.}, {\bf 27}, 235-257.

\bibitem{Bergia80} Bergia, S., Cannata, F., Cornia, A., Livi, R., 1980, ``On the Actual Measurability of the Density Matrix of a Decaying System by Means of Measurements on the Decay Products", {\it Foundations of Physics}, {\bf 10}, 723-730.

\bibitem{Birman09} Birman, F., 2009, ``Quantum Mechanics, Correlations and Relational Probability", {\it Cr\'itica}, {\bf 41}, 3-22.

\bibitem{Bub97} Bub, J., 1997, {\it Interpreting the Quantum World}, Cambridge University Press, Cambridge.

\bibitem{Cabello98} Cabello, A., 1998, ``Quantum correlations are not local elements of reality", {\it Physical Review A}, {\bf 59}, 113-115.

\bibitem{Cabello99} Cabello, A., 1999, ``Quantum correlations are not contained in the initial state", {\it Physical Review A}, {\bf 60}, 877-880.

\bibitem{cabello} Cabello, A., Estebaranz, J. and Garc{\'{\i}}a-Alcaine, G., 1996, ``Bell-{K}ochen-{S}pecker theorem: a proof with {$18$} vectors.'', {\em Physical Letters A}, {\bf 212}, 183-187.

\bibitem{deRonde16a} de Ronde, C., 2016, ``Probabilistic Knowledge as Objective Knowledge in Quantum Mechanics: Potential Immanent Powers instead of Actual Properties'', in {\it Probing the Meaning of Quantum Mechanics: Superpositions, Semantics, Dynamics and Identity}, pp. 141-178, D. Aerts, C. de Ronde, H. Freytes and R. Giuntini (Eds.), World Scientific, Singapore.

\bibitem{deRonde17a} de Ronde, C., 2017, ``Quantum Superpositions and the Representation of Physical Reality Beyond Measurement Outcomes and Mathematical Structures'', {\it Foundations of Science}, https://doi.org/10.1007/s10699-017-9541-z. (quant-ph/arXive:1603.06112)

\bibitem{deRonde17b} de Ronde, C., 2017, ``Immanent Powers versus Causal Powers (Propensities, Latencies and Dispositions) in Quantum Mechanics'', in {\it Probing the Meaning of Quantum Mechanics}, D. Aerts, M.L. Dalla Chiara, C. de Ronde and D. Krause (Eds.), World Scientific, Singapore, forthcoming. (quant-ph/arXive:1711.02997) 

\bibitem{deRondeFM18} de Ronde, C. and Fern\'andez Mouj\'an, R., 2018, ``Epistemological vs. Ontological Relationalism in Quantum Mechanics: Relativism or Realism?'', in {\it Probing the Meaning of Quantum Mechanics}, D. Aerts, M.L. Dalla Chiara, C. de Ronde and D. Krause (Eds.), World Scientific, Singapore, forthcoming. (quant-ph/arXive:1712.06477)

\bibitem{RFD14} de Ronde, C., Freytes, H. and Domenech, G., 2014, ``Interpreting the Modal Kochen-Specker Theorem: Possibility and Many Worlds in Quantum Mechanics'', {\it Studies in History and Philosophy of Modern Physics}, {\bf 45}, 11-18.

\bibitem{deRondeMassri18a} de Ronde, C. and Massri, C., 2018, ``The Logos Categorical Approach to Quantum Mechanics: I. Kochen-Specker Contextuality and Global Intensive Valuations.'', {\it International Journal of Theoretical Physics}, DOI: 10.1007/s10773-018-3914-0. (quant-ph/arXive:1801.00446)

\bibitem{deRondeMassri18b} de Ronde, C. and Massri, C., 2018, ``The Logos Categorical Approach to Quantum Mechanics: II. Quantum Superpositions'', submitted. (quant-ph/arXive:1802.00415)

\bibitem{deRondeMassri18c} de Ronde, C. and Massri, C., 2018, ``The Logos Categorical Approach to Quantum Mechanics: III. Relational Potential Coding and Quantum Entanglement Beyond Collapses, Pure States and Particle Metaphysics.'', preprint. (quant-ph/arXive:1807.08344)

\bibitem{graphtheory} Diestel, R., 2010, {\em Graph theory}, volume 173 of {\em Graduate Texts in
  Mathematics (fourth edition)}, Springer, Heidelberg.
  
\bibitem{EPR} Einstein, A., Podolsky, B. and Rosen, N., 1935,
``Can Quantum-Mechanical Description be Considered Complete?'', {\it
Physical Review}, {\bf 47}, 777-780.

\bibitem{QBism13} Fuchs, C.A., Mermin, N.D. and Schack, R., 2014, ``An introduction to QBism with an application to the locality of quantum mechanics'', {\it American Journal of Physics}, {\bf 82}, 749.  (quant-ph/arXive:1311.5253)

\bibitem{QBism15} Fuchs, C.A., Mermin, N.D. and Schack, R., 2015, ``Reading QBism: A Reply to Nauenberg'', {\it American Journal of Physics}, {\bf 83}, 198.

\bibitem{GHZ} Greenberger, D. M., Horne, M. A. and Zeilinger, A., 1989, ``Going Beyond Bell's Theorem", In {\it Bell's Theorem, Quantum Theory, and Conceptions of the Universe}, 69-78, M. Kafatos (Ed.), Kluwer, Dordrecht.

\bibitem{Griffiths03} Griffiths, R., 2003, ``Probabilities and Quantum Reality: Are There Correlata?", {\it Foundations of Physics}, {\bf 33}, 1423-1459. 

\bibitem{Hardy93} Hardy, L., 1993, ``Nonlocality for two particles without inequalities for almost all states", {\it Physics Review Letters}, {\bf 71}, 1665-1668.

\bibitem{Healey16}  Healey, R., 2017, ``Quantum-Bayesian and Pragmatist Views of Quantum Theory'', {\it The Stanford Encyclopedia of Philosophy (Spring 2017 Edition)}, Edward N. Zalta (ed.), URL = https://plato.stanford.edu/archives/spr2017/entries/quantum-bayesian/.

\bibitem{Howard10} Howard, D, 2010, ``Einstein's Philosophy of Science'', {\it The Stanford Encyclopedia of Philosophy (Summer 2010 Edition)}, E. N. Zalta (Ed.), http://plato.stanford.edu/archives/sum2010/entries/einstein-philscience/.

\bibitem{Kalmbach} Kalmbach, G., 1998, {\it  Quantum Measures and Spaces}, Kluwer Academic Publishers, Dordrecht.

\bibitem{KS} Kochen, S. and Specker, E., 1967, ``On the problem of Hidden Variables in Quantum Mechanics", {\it Journal of Mathematics and Mechanics}, {\bf 17}, 59-87. Reprinted in Hooker, 1975, 293-328.

\bibitem{maclane98} Mac~Lane, S., 1998, {\em Categories for the working mathematician}, volume~5 of {\em Graduate Texts in Mathematics (second edition)}, Springer-Verlag, New York.

\bibitem{Mermin98a} Mermin, D., 1998, ``The Ithaca Interpretation of Quantum Mechanics", {\it Pramana}, {\bf 51}, 549. (quant-ph/arXive:9609013)

\bibitem{Mermin98b} Mermin, D., 1998, ``What is Quantum Mechanics Trying to Tell Us?", {\it American Journal of Physics}, {\bf 66}, 753-767. (quant-ph/arXive:9801057)

\bibitem{Mermin99} Mermin, D., 1999, ``What Do These Correlations Know about Reality? Nonlocality and the Absurd", {\it Foundations of Physics}, {\bf 29}, 565-571. (quant-ph/arXive:9807055)

\bibitem{Mermin04} Mermin, D., 2004, ``Copenhagen Computation: How I Learned to Stop Worrying and Love Bohr'', {\it IBM Journal of Research and Development}, 48-53. (quant-ph/arXive:0305088) 

\bibitem{Mermin14a} Mermin, D., 2014, ``Why QBism is not the Copenhagen interpretation and what John Bell might have thought of it'', Preprint. (quant-ph/arXive:1409.2454) 

\bibitem{Mermin14b} Mermin, D., 2014, ``QBism puts the scientist back into science'', {\it Nature}, {\bf 507}, 421-423. 

\bibitem{Rovelli96} Rovelli, C., 1996, ``Relational Quantum Mechanics", {\it International Journal of Theoretical Physics}, {\bf 35}, 1637-1678. 

\bibitem{Seevinck06} Seevinck, M., 2006, ``The quantum world is not built up from correlations", {\it Foundations of Physics}, {\bf 36}, 1573-1586.

\bibitem{Svozil17} Svozil, K., 2017, ``Classical versus quantum probabilities and correlations'', preprint (arXiv:1707.08915).

\bibitem{TimpsonBrown10} Timpson, C.G. and Brown, H., 2010, ``Building with quantum correlations", {\it Quantum Information Process}, {\bf 9}, 307-320.
 
\bibitem{weyl49} Weyl, H., 1949,
\newblock {\em Philosophy of {M}athematics and {N}atural {S}cience. {R}evised
  and {A}ugmented {E}nglish {E}dition {B}ased on a {T}ranslation by {O}laf
  {H}elmer}.
\newblock Princeton University Press, Princeton, N. J.

\bibitem{WZ} Wheeler, J. and Zurek, W. (Eds.) 1983, {\it Theory and Measurement}, Princeton University Press, Princeton.

\bibitem{Wootters90} Wootters, W. K., 1990, ``Local accessibility of quantum states", in {\it Complexity, Entropy and the Physics of Information}, 39-46, W. Zurek (Ed.), Addison-Wesley, Boston.

\end{thebibliography}
\end{document}